\pgfplotsset{compat=1.10}
\definecolor{Gray}{gray}{0.90}
\newtheorem{theorem}{\bf{Theorem}}[section]
\newtheorem{cor}[theorem]{Corollary}
\newtheorem{lem}[theorem]{\bf{Lemma}}
\newtheorem{definition}{Definition}[section]
\newlist{Properties}{enumerate}{2}
\setlist[Properties]{label=Property \arabic*., font=\textbf, itemindent=*}
\begin{document}
\title{\LARGE{A Geometric Approach to Resilient Distributed Consensus Accounting for State Imprecision and Adversarial Agents}}
\author{Christopher A. Lee and Waseem Abbas
\thanks{C.~Lee is with the Electrical Engineering Department, and W.~Abbas is with the Systems Engineering Department at the University of Texas at Dallas, Richardson, TX, USA (Emails: \texttt{christopher.lee3@utdallas.edu, waseem.abbas@utdallas.edu}).}
}

\maketitle
\begin{abstract}
This paper presents a novel approach for resilient distributed consensus in multiagent networks when dealing with adversarial agents imprecision in states observed by normal agents. Traditional resilient distributed consensus algorithms often presume that agents have exact knowledge of their neighbors' states, which is unrealistic in practical scenarios. We show that such existing methods are inadequate when agents only have access to imprecise states of their neighbors. To overcome this challenge, we adapt a geometric approach and model an agent's state by an `imprecision region' rather than a point in $\mathbb{R}^d$. From a given set of imprecision regions, we first present an efficient way to compute a region that is guaranteed to lie in the convex hull of true, albeit unknown, states of agents. We call this region the \emph{invariant hull} of imprecision regions and provide its geometric characterization. Next, we use these invariant hulls to identify a \emph{safe point} for each normal agent. The safe point of an agent lies within the convex hull of its \emph{normal} neighbors' states and hence is used by the agent to update it's state. This leads to the aggregation of normal agents' states to safe points inside the convex hull of their initial states, or an approximation of consensus. We also illustrate our results through simulations. Our contributions enhance the robustness of resilient distributed consensus algorithms by accommodating state imprecision without compromising resilience against adversarial agents.
\end{abstract}


\section{Introduction}
\label{sec:intro}
In distributed multi-agent networks, agents work collaboratively to accomplish complex tasks by sharing information with each other. To make optimal decisions, they rely on this shared data. However, false or misleading information from a subset of agents--due to adversarial actions or other failures--can significantly compromise the network's operations. Thus, to realize the full benefits of distributed decision-making, resilience to abnormal agents is crucial for the correct functioning of multi-agent systems. As such, there has been an increased interest in characterizing and imposing resilience in distributed systems.

Recent works on resilient distributed networks, particularly in resilient distributed consensus—a canonical problem in networked and distributed control systems--present a suite of algorithms and conditions on the network structure. If these conditions are satisfied, they guarantee the normal operations of the overall system~\cite{ishii2022overview,pirani2023graph}. The main idea of such algorithms is to minimize the impact of information from adversarial or faulty agents, even without knowing their identities~\cite{park2017fault,abbas2022resilient,leblanc2013resilient,su2016multi,mendes2015multidimensional,yan2022resilient,yan2020safe,wang2022resilient,li2019resilient,yang2019byrdie,su2020byzantine,zhu2023resilient,kuwaranancharoen2020byzantine,usevitch2019resilient,safi2022resilient,dibaji2017resilient}. However, existing resilient consensus solutions often make the unrealistic assumption that agents can observe their neighbors' exact, unperturbed states. In real-world applications, agents deal with imprecise data due to factors like sensor noise, environmental conditions, or hardware limitations~\cite{jayasimha1994fault,nakamura2007information,loffler2009data,park2017security,frehse2014formal}. As we demonstrate in this paper, applying current resilient algorithms in these `imprecise' scenarios can lead to failures, even if the number of adversarial agents is below the theoretical threshold allowed by the solution. Therefore, new methods are needed, and this paper introduces strategies to handle both adversarial agents and imprecise states in multi-agent networks.

This paper introduces a new method to tackle the resilient distributed consensus problem, focusing on the commonly overlooked issue of state imprecision, that is, the observed states vary from the true state by a known bounded measure. There are existing studies that address a similar problem in a static scenario \cite{Takayuki2000}, however our formulation is developed independently and tailored towards the application of resilient consensus. Our approach ensures that all normal agents remain within the convex hull of their initial states and continuously converge towards a ball contained in the convex hull of the normal agents' initial positions.  The radius of the ball is dependent on the initial configuration, the actions of the attacker, and the stochastic discrepancy between true and observed states.  However, the final state of each normal agent is guaranteed to be strictly contained in the initial convex hull of normal agents. We show that this approximation of ``consensus" is achieved by the CPIH algorithm proposed in this paper, while existing distributed consensus algorithms fail to achieve even approximate consensus under imprecision. The crux of our method is ensuring that in each step of the state update process, a normal agent can find a point that is guaranteed to lie in the convex hull of its normal neighbors only. The agent then updates its state by moving towards that point, which is also referred to as the \emph{safe point}~\cite{abbas2022resilient,park2017fault,yan2020safe,li2022byzantine}. We show that existing methods fail to identify such safe points when faced with imprecision. In contrast, our approach successfully identifies these points and maintains the same level of resilience against adversarial agents as existing methods that do not account for state imprecision. 

We summarize our key contributions as follows:

\begin{itemize}
    \item First, we show that current algorithms for resilient distributed consensus do not work well when agents have imprecise states (Section~\ref{sec:imprecision}). This is important because imprecise states are common in real-world applications.

\item Second, we introduce a new way to model these imprecise states. Instead of using points in $\mathbb{R}^d$, we use what we call `imprecision regions' containing true states of agents. We then introduce the notion of \emph{invariant hull} which is essentially the largest region that is contained in the convex hull of normal agents' true states when only their imprecision regions are known. We also provide a geometric characterization of invariant hulls and an efficient algorithm to compute them (Section~\ref{sec:Invariant_hull}).

    \item Third, we develop a method that leverages these invariant hulls to identify a `safe point' in the neighborhood of a normal agent. This neighborhood may contain up to $\frac{N_v}{d+1} -1 $ adversaries. Here, $N_v$ is the total number of nodes in the neighborhood of a normal agent $v$ and $d$ is the state dimension. This safe point allows the normal agent to update its state while ignoring any adversaries. We also illustrate our results through simulations (Section~\ref{sec:CPIH}).
\end{itemize}

The rest of the paper is organized as follows: Section~\ref{sec:prelim} introduces preliminaries and notations. Section~\ref{sec:imprecision} reviews existing resilient consensus algorithms and show their inadequacy with imprecise states. Section~\ref{sec:Invariant_hull} introduces the invariant hull notion and discusses its computation and significance. Section~\ref{sec:CPIH} presents methods to compute safe point for resilient distributed consensus with imprecision, and present simulations. Finally, Section~\ref{sec:con} concludes the paper. 
\section{Preliminaries}
\label{sec:prelim}
We consider a network of agents modeled by an undirected graph $G = (V,E)$, where $V$ represents agents and $E$ denotes interactions among agents. An edge between agents $u$ and $v$ is denoted by an unordered pair $(u,v)$. We use the terms \emph{agent} and \emph{node} interchangeably. Each agent $u\in V$ has a $d$-dimensional state vector whose value is updated over time. The state of agent $u$ {at time $t$} is represented by {a point $x_u(t) \in \mathbb{R}^d$}. The \emph{neighborhood} of $u$ is the set of nodes $N_u =~\{v\in V: \; (u,v)\in E\}\cup\{u\}$ (node $u$ is included in its neighborhood). For a given set of points ${X}\subset\mathbb{R}^d$, we denote its \emph{convex hull} by \texttt{Conv}$({X})$. Also, we use terms \emph{points} and \emph{states} interchangeably.

\emph{Normal and Adversarial Agents --} Agents in the network can either be {normal} or {adversarial}. \emph{Normal} agents, denoted by $V_n\subseteq V$, are the ones that interact with their neighbors synchronously and update their states according to a pre-defined state update rule, which is the consensus algorithm. \emph{Adversarial} agents, denoted by $V_f\subset V$, are the ones that can change their states arbitrarily. Moreover, an adversarial node can transmit different values to its different neighbors, which is referred to as the \emph{Byzantine} model. The number of adversarial nodes in the neighborhood of a normal node $u$ is denoted by $F_u$. For a normal node $u$, all nodes in its neighborhood are indistinguishable, that is, $u$ cannot identify which of its neighbors are adversarial.


\emph{Agent Imprecision --} We consider that each agent has an imprecise information of its neighbor's state. In other words, an agent $u$ having agent $v$ as a neighbor, acquires/observes an imprecise value of agent $v$'s state, which could be due to perturbation as a result of improper calibration, hardware imprecision, or other measurement uncertainties. We denote this imprecise value of agents $v$'s state by $r_v$, and associate a \emph{region of imprecision}, $B_v\subset \mathbb{R}^d$ such that the true value of agent $v$'s state, i.e., $x_v$, can be anywhere inside this $B_v$. An example of this imprecision model is a hypersphere with a center $r_v$ and radius $\delta$. The true state value $x_v$ can be anywhere inside this hypersphere of radius $\delta$ (indicating the extent of imprecision). Another example of imprecision model is a hypercube, where imprecision in each coordinate is at most $\delta$, as shown in Figure~\ref{fig:imprecision_box}. 

\begin{figure}[!h]
\centering
\includegraphics[scale=0.75]{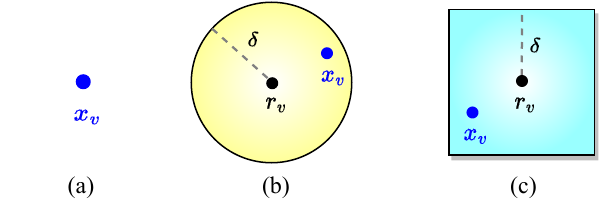}
\caption{(a) An exact point $x_v$ (no imprecision). (b) A disk imprecision region in $\mathbb{R}^2$. The observed point is $r_v$, which is the imprecise version of the exact point $x_v$ that lies somewhere inside the disk. (c) A square imprecision region in $\mathbb{R}^2$.}
\label{fig:imprecision_box}
\end{figure}


\emph{Resilient Consensus Problem --} In a network with both normal and adversarial agents, the aim of resilient vector consensus is to design a mechanism that ensures all normal agents update their states to eventually converge on a common state. This common state must lie within the convex hull of their initial states, denoted as $X(0) = \{x_1(0),x_2(0),\cdots,x_n(0)\}$. The mechanism must satisfy the following conditions: 

\begin{itemize}
    \item \emph{Safety:} At any time step $t$, the state of any normal node $v$ must lie within $\texttt{Conv}(X(0))$.
    \item \emph{Agreement:} For every $\epsilon >0$, there exists a time $t_\epsilon$, such that the states of any two normal agents $u$ and $v$ come within $\epsilon$ of each other for all $t> t_\epsilon$. 
\end{itemize}

Next, we discuss the resilient consensus solutions with and without imprecision. 
\section{Effect of Imprecision}
\label{sec:imprecision}
In this section, first, we review a resilient distributed consensus algorithm guarantees convergence of normal nodes despite adversarial agents without considering any imprecision in agents' states. Then, we consider imprecision and show that the existing resilient algorithms fail to achieve consensus and perform poorly under states' imprecision. 
\subsection{Resilient Consensus with No Imprecision}
Several resilient consensus schemes have been proposed in the literature for the scalar and higher dimensional cases, e.g., \cite{park2017fault,abbas2022resilient,leblanc2013resilient,su2016multi,mendes2015multidimensional,yan2022resilient,yan2020safe,wang2022resilient,ishii2022overview,li2019resilient,yang2019byrdie,su2020byzantine,zhu2023resilient,kuwaranancharoen2020byzantine,pirani2023graph}. The main approach, especially in higher dimensions $d\ge 2$, relies on the principle that in each round/step of the consensus algorithm, each normal node computes a point that lies in the convex hull of its normal neighbors' state, and then updates its state by moving towards that point, which is sometimes referred to as the \emph{`safe point'}. The computation of this safe point is a challenging endeavour, and various methods are employed. A useful way that results in superior resilience (in terms of the number of adversarial agents that can be allowed) is to obtain a safe point by computing a centerpoint of the agents' states. A centerpoint is a generalization of median in higher dimensions, and is defined below. 

\begin{definition}(Centerpoint)
Given a set \(S\) of \(N\) points in \(\mathbb{R}^d\), a centerpoint \(p\) is a point with the property that every closed halfspace of $\mathbb{R}^d$ containing \(p\) must also contain at least \( \frac{N}{d+1} \) points of \(S\). 
\end{definition}

The set of all centerpoints is referred to as the \emph{centerpoint region}. Figure~\ref{fig:CP} illustrates an example. There are six points in $\mathbb{R}^2$, and any line passing through a centerpoint divides these six points into two regions, each containing at least two points, as in Figures~\ref{fig:CP}(a) and (b). Figure~\ref{fig:CP}(c) illustrates the centerpoint region for the given example.
\begin{figure}[!h]
\centering
\begin{subfigure}[b]{0.15\textwidth}
\centering
\includegraphics[scale=0.82]{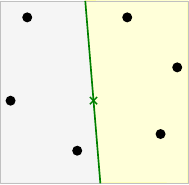}\\
\caption{}
\end{subfigure}
\begin{subfigure}[b]{0.15\textwidth}
\centering
\includegraphics[scale=0.82]{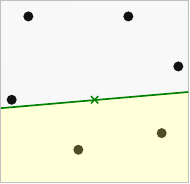}
\caption{}
\end{subfigure}
\begin{subfigure}[b]{0.15\textwidth}
\centering
\includegraphics[scale=0.82]{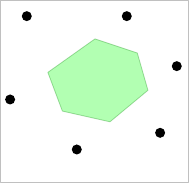}
\caption{}
\end{subfigure}
\caption{Illustration of centerpoint. In (a) and (b), centerpoint is denoted by `$\times$' and lines are passing through the centerpoint. The green shaded region in (c) is the centerpoint region. }
\label{fig:CP}
\end{figure}
The notion of centerpoint provides a complete characterization of the safe point. It is shown in \cite{abbas2022resilient} that a safe point for an agent $v$ is essentially a centerpoint of its neighbors' states as long as the number of adversaries in the neighborhood of $v$ is bounded by $\frac{N_v}{d+1}$. Here, $d$ is the dimension of the state and $N_i$ is the number of neighbors of $v$. Moreover, a safe point may not exist if the number of adversaries is more than $\frac{N_v}{d+1}$. Thus, the centerpoint-based safe point computation is particularly useful. 

Now, a \emph{resilient consensus algorithm} based on centerpoints and considering \emph{no imprecision}---agents observe their normal neighbors' states exactly---can be designed as follows:

\begin{itemize}
    \item In each iteration $t$, a normal agent $u$ gathers the state values of its neighbors in $N_u$, and computes a safe point $s_u(t)$ by computing a centerpoint agents' states.
    \item Agent $u$ updates its states as follows:
    \begin{equation}
        \label{eq:res_con}
        x_u(t+1) = \alpha_u(t)s_u(t) + (1-\alpha_u(t))x_u(t),
    \end{equation}
    where $\alpha_u(t)\in (0\;\;1)$ is a dynamically chosen parameter whose value depends on the application \cite{park2017fault}.
\end{itemize}

Analysis in \cite{abbas2022resilient} and \cite{park2017fault} shows that the above scheme achieves resilient consensus as long as the number of adversaries in each normal agent $u$ is at most $\frac{N_u}{d+1}-1$. 

Next, we see how the above resilient consensus (and other safe point based) algorithms perform if we consider imprecise agents' states.

\subsection{Resilient Solutions Do Not Work Under Imprecision}
\label{sec:Res_do_not_work}
The fundamental premise of resilient consensus algorithms lies in computing safe points that are always inside the convex hull of \emph{normal agents' true states}. In the case of imprecision, however, the true states are unknown. Thus, computing a safe point is challenging, and the existing approaches prove inadequate. Note that, in general, true consensus is not possible when there is fixed imprecision in state values, however resilient consensus algorithms can fail to achieve \emph{approximate consensus} as defined in Section \ref{sec:intro}. For example, a centerpoint computed by a normal agent based on its neighbors' \emph{observed states} (due to imprecision) does not always lie in the convex hull of its normal neighbors' true states and, therefore, is not a safe point. Figure~\ref{fig:cp_imprecise} illustrates this situation.  

In Figure~\ref{fig:cp_imprecise}(a), we have six agents that are in the neighborhood of a normal agent $v$ (including $v$). One of the agents, $u$, is an adversary. Note that $v$ remains oblivious to the identity of the adversarial agent within its neighborhood. Each agent has a state in $\mathbb{R}^2$ accompanied by a region of imprecision, which we assume to be a square. The centerpoint region based on the observed states (represented by `$\bullet$') is highlighted in green, whereas the convex hull of normal agents' true states (indicated by `$\times$') is depicted as grey. Figure~\ref{fig:cp_imprecise}(b) reveals the challenge imprecision poses. The centerpoint region fails to remain entirely contained within the convex hull of the normal agents' true states. Consequently, agent $v$ may select a centerpoint (indicated by '$\circ$') that does not qualify as a safe point. As a result, $v$ may update its state by moving in a direction outside the convex hull of normal agents' true states, thus violating the safety condition of the resilient consensus. 

\begin{figure}[!h]
    \centering
\includegraphics[scale=.95]{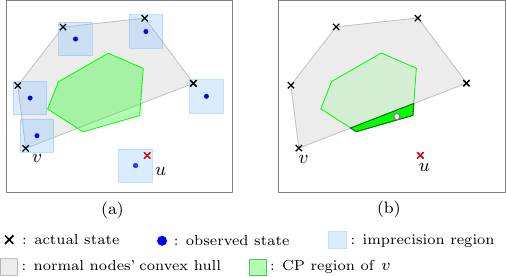}
    \caption{Centerpoint region based on the observed states (due to imprecision) is not contained entirely in the convex hull of normal agents' initial states.}
\label{fig:cp_imprecise}
\end{figure}

Figure~\ref{fig:res_con_imprecision} demonstrates examples of the run of the resilient consensus algorithm with and without the imprecision model. All six agents are pairwise adjacent, and there is one adversary. With no imprecision, all normal agents converge inside the convex hull of their initial states despite a single adversary, as shown by the state trajectories of agents in Figure~\ref{fig:res_con_imprecision}(a). However, with imprecision (square regions), the resilient consensus algorithm fails as normal agents do not remain inside their initial states' convex hull and continue moving farther away, as shown in Figure~\ref{fig:res_con_imprecision}(b). 

Thus, imprecision presents a significant obstacle to successfully operating resilient distributed algorithms. The primary hurdle arises from the \emph{difficulty in computing a safe point when dealing with imprecise observed states}. Thus, it is imperative to explore new approaches to ensure the accurate computation of safe points, even in the presence of imprecision. We address this problem in the next section.  

\begin{figure}[!h]
\centering
\begin{subfigure}[b]{0.24\textwidth}
\centering
\includegraphics[scale=0.1775]{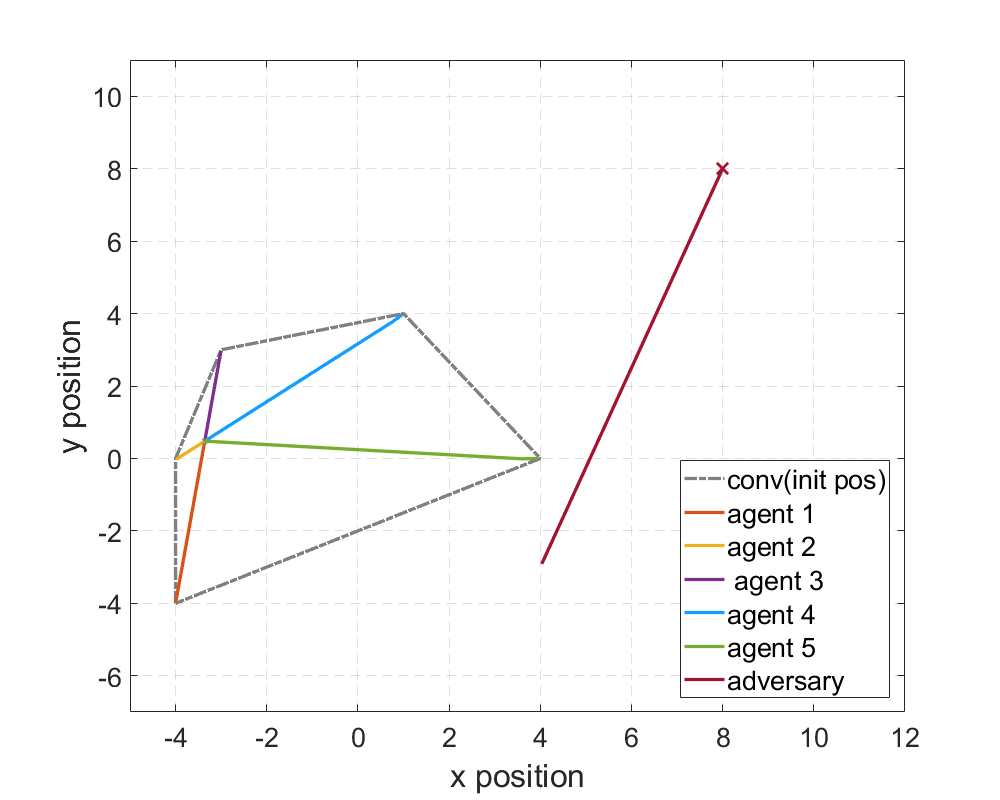}
\caption{}
\end{subfigure}
\begin{subfigure}[b]{0.24\textwidth}
\centering
\includegraphics[scale=0.1775]{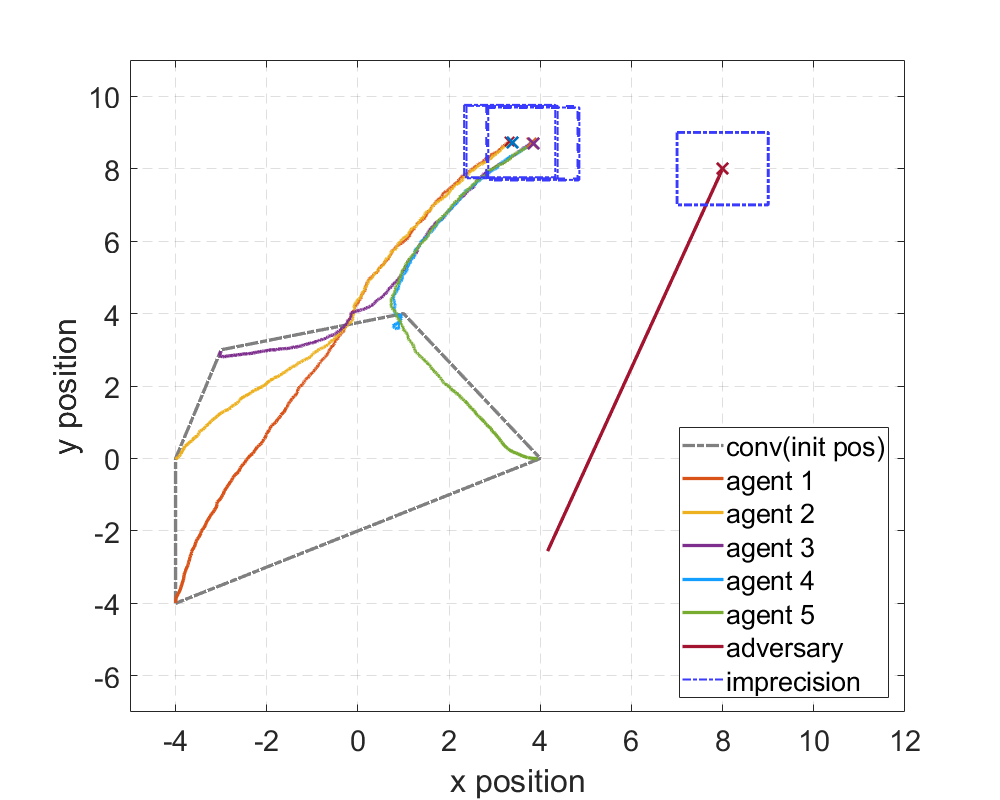}
\caption{}
\end{subfigure}
\caption{(a) Normal agents achieve resilient consensus with no imprecision. (b) Normal agents do not converge and move outside the convex hull of normal agents' initial states due to imprecision.}
\label{fig:res_con_imprecision}
\end{figure}

\section{Resilience in the Presence of Imprecision}
\label{sec:Invariant_hull}
In this section, we address the challenge of dealing with imprecision in resilient consensus algorithms and present our approach to mitigate its impact. Our main objective is to enable normal agents to \emph{compute safe points even when they encounter imprecise state values from their neighbors}. 

When a normal agent $v$ observes an imprecise state of its neighbor $u$, the agent $v$ essentially observes an imprecision region associated with $u$. This imprecision region contains the true state value of $u$, and is termed as a \emph{potential region} (as any value in this region can potentially be a true value of the agent). Consequently, agent $u$ effectively observes a collection of potential regions linked to its neighbors. By selecting one value (a point) from each potential region, we construct a \emph{potential configuration} for the given set of potential regions. It is worth noting that the true values of the agents represent just one potential configuration among an infinite set. Our focus lies in identifying the largest region contained within the convex hull of \emph{any} potential configuration from a given set of potential regions. We refer to this set as an \emph{invariant hull}. Importantly, an invariant hull, associated with a given set of potential regions is a subset of the convex hull of the true states of agents. At a high level, an invariant hull, pertaining to a specific set of potential regions, is akin to the convex hull of a given set of points. Figure~\ref{fig:invariant_hull} illustrates these concepts. Figure~\ref{fig:invariant_hull}(a) shows a set of potential regions (blue boxes) and the associated invariant hull. Figure~\ref{fig:invariant_hull}(b) shows one potential configuration (set of points in potential regions indicated by `$\times$') and the associated convex hull of the potential configuration (grey shaded region). Note that the invariant hull is contained in the convex hull of the potential configuration shown. Similarly, if we choose any other potential configuration, the invariant hull will be contained in the convex hull of such a configuration. 

\begin{figure}[!h]
\centering
\begin{subfigure}[b]{0.24\textwidth}
\centering
\includegraphics[scale=0.455]{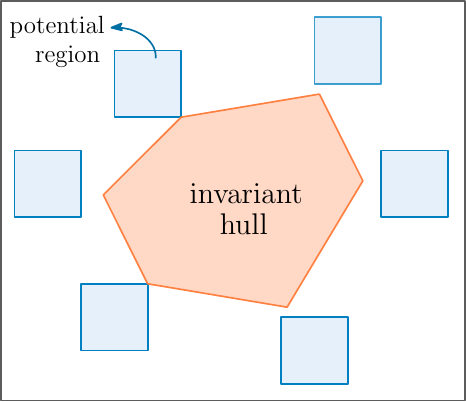}
\caption{}
\end{subfigure}
\begin{subfigure}[b]{0.24\textwidth}
\centering
\includegraphics[scale=0.455]{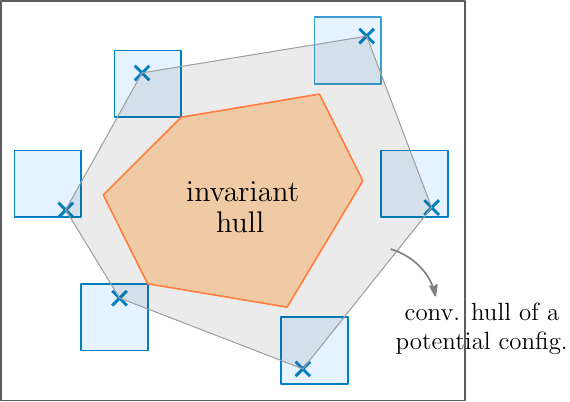}
\caption{}
\end{subfigure}
\caption{(a) Invariant hull of a set of potential regions. (b)~Invariant hull is a subset of the convex hull of an arbitrary potential configuration.}
\label{fig:invariant_hull}
\end{figure}

Next, we introduce some notations and formally define \emph{invariant hull}.

\begin{center}
\begin{tabular}{ l l}
 $x_v$ & Actual/true state of agent $v$; \\
 $r_v$ & Observed state of agent $v$; \\ 
 $B_v$ & potential region of agent $v$; \\  
 $B_V$ & Set of potential regions of agents in the set\\
 & $V =\{v_1,\cdots, v_n\}$.\\
\end{tabular}
\end{center}

\begin{definition}(Potential Configuration) For a given set of potential regions $B_V = \{B_1,\cdots,B_n\}$, a potential configuration is a set of points $P(B_V) = \{p_1,\cdots, p_n\}$ such that $p_v\in B_v$ for each $v$.
\end{definition}

Now, we define the notion of invariant hull.

\begin{definition} (Invariant Hull)
Consider a set of potential regions $B_V$. Let $\mathcal{P}(B_V)$ be the set of all possible potential configurations, and \texttt{Conv}(P) denotes the convex hull of a potential configuration $P\in\mathcal{P}(B_V)$. Then, the invariant hull of $B_V$ is defined as,

\begin{equation}
    \texttt{IHull}(B_V) = \bigcap\limits_{P\in\mathcal{P}(B_V)} \texttt{Conv}(P).
\end{equation}
\end{definition}

In simpler terms, for the invariant hull, we find all possible potential configurations of $B_V$, compute the convex hull of each such configuration, and finally compute the intersection of all such convex hulls. Consequently, the invariant hull is essentially a subset of the convex hull of any potential configuration of $B_V$, as Figure~\ref{fig:invariant_hull} illustrates. Also, the invariant hull is a convex set. 

Next, we provide a geometric characterization of the invariant hull. Then, in Section~\ref{sec:invariant_hull_algo}, we present an efficient method to compute the invariant hull of $B_V$.

\subsection{Characterization of Invariant Hull}
\label{sec:xzation_IHull}
First, we need to introduce notations to denote \emph{combinatorial families} of sets. 

\begin{itemize}
    \item Consider a set of potential regions \emph{$B_V$, then $B_V^{d+1}$ denotes a family of all $(d+1)$-member subsets of $B_V$} (i.e., subsets of $B_V$ consisting of $d+1$ potential regions). For example, consider $V = \{v_1,v_2,v_3,v_4\}$, the corresponding $B_V = \{B_1, B_2,B_3,B_4\}$, and $d =2$, then
\begin{align*}
    B_V^{3} = & \bm{\{}\;\{B_1,B_2,B_3\},\; \{B_1,B_2,B_4\}\; \{B_1,B_3,B_4\},\\
    & \{B_2,B_3,B_4\} \;\bm{\}}.
\end{align*}

\item Similarly, for $Q\in B_V^{d+1}$, the notation $Q^{d}$ denotes the family of all $d$-member subsets of $Q$. For example, in the above example, if $Q = \{B_1,B_2,B_3\}\in B_V^3$, then 
$$
Q^2 = \bm{\{} \{B_1, B_2\}, \; \{B_1, B_3\}, \; \{B_2, B_3\} \; \bm{\}}.
$$
\end{itemize}
So, in general, a \emph{superscript} on a set notation denotes the combinatorial family of the set. 

The following lemma identifies the essential property for a point to be in the invariant hull of a set of potential regions.

\begin{lem}
\label{lem:first}
Let \(Q \in B_V^{d+1} = \{q_1,..,q_{d+1}\}\) be a \((d+1)\)-member subset of potential regions.  If \(\texttt{IHull}(Q)\) is non-empty, then a point \(p \in \texttt{IHull}(Q)\) if and only if \(p\) has the following property: 
\begin{Properties}
\item For every hyperplane \(h\) passing through \(p\), at least one potential region \(q_i \subset Q\) is contained in each of the associated halfspaces of \(h\).
\end{Properties}
\end{lem}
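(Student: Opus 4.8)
The plan is to prove both directions of the biconditional by relating membership in $\texttt{IHull}(Q)$ to a separation/supporting-hyperplane argument on the $d+1$ potential regions.

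For the ``only if'' direction, suppose $p \in \texttt{IHull}(Q)$ but Property~1 fails: there is a hyperplane $h$ through $p$ with one associated open halfspace, say $h^+$, containing no region $q_i$ entirely --- meaning every $q_i$ has at least one point in the closed complementary halfspace $\overline{h^-}$. I would then build a single bad potential configuration: from each $q_i$ pick a point $p_i \in q_i \cap \overline{h^-}$. All $d+1$ chosen points lie in the closed halfspace $\overline{h^-}$, hence so does their convex hull, so $\texttt{Conv}(\{p_1,\dots,p_{d+1}\}) \subseteq \overline{h^-}$. But $p$ lies strictly on $h$'s boundary with $p \notin \overline{h^-}$ unless $p \in h$; since $p\in h$, I need to be slightly more careful --- I'd perturb $h$ to a parallel hyperplane $h'$ slightly on the $h^-$ side so that $p \notin \overline{h'^-}$ while still each $q_i$ meets $\overline{h'^-}$ (using that the ``no region entirely in $h^+$'' condition is about the closed halfspace, or by taking $h^+$ to be the closed halfspace in the statement and arguing with strict separation). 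The cleaner route: if some closed halfspace $H$ of $h$ contains no $q_i$, then the complement open halfspace meets every $q_i$; choose $p_i$ there, get $\texttt{Conv}(\{p_i\})$ inside that open halfspace, which excludes $p$ (since $p\in h \subseteq H^c{}^c$, i.e. $p \in H$ but not in the open complement). Either way, this exhibits a potential configuration whose convex hull misses $p$, contradicting $p \in \texttt{IHull}(Q) \subseteq \texttt{Conv}(P)$ for all $P$.

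For the ``if'' direction, assume $p$ satisfies Property~1 and suppose for contradiction that $p \notin \texttt{IHull}(Q)$, i.e., there is a potential configuration $P = \{p_1,\dots,p_{d+1}\}$ with $p_i \in q_i$ and $p \notin \texttt{Conv}(P)$. Since $\texttt{Conv}(P)$ is a closed convex set (a simplex on $\le d+1$ points) and $p$ is outside it, the separating hyperplane theorem gives a hyperplane strictly separating $p$ from $\texttt{Conv}(P)$; translate it to a hyperplane $h$ through $p$ parallel to the separator, so that all of $p_1,\dots,p_{d+1}$ lie strictly in one open halfspace $h^-$. Then the other closed halfspace $\overline{h^+}$ contains $p$ but contains none of the points $p_i$, hence contains no region $q_i$ entirely (each $q_i$ contains its $p_i \in h^-$). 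This contradicts Property~1 applied to $h$. Hence $p \in \texttt{IHull}(Q)$.

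The main obstacle I anticipate is the boundary bookkeeping between open and closed halfspaces: the statement of Property~1 quantifies over hyperplanes \emph{through} $p$ and speaks of regions \emph{contained in} the associated halfspaces, so I must be careful about whether ``contained in a halfspace of $h$'' allows a region to touch $h$, and correspondingly whether the separating hyperplane I produce can be taken through $p$ with strict separation on the other side. I would resolve this by consistently using the closed-halfspace formulation (as in the centerpoint definition) and invoking strict separation of the point $p$ from the compact set $\texttt{Conv}(P)$, then relaxing to a hyperplane through $p$; the non-emptiness hypothesis on $\texttt{IHull}(Q)$ is used only to ensure the statement is not vacuous and does not otherwise enter the argument. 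A secondary point worth stating explicitly is that a convex hull of $d+1$ points is automatically closed, so no closure issues arise on the $\texttt{Conv}(P)$ side.
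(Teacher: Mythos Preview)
Your proposal is correct and follows essentially the same approach as the paper: both directions are argued by contraposition/contradiction via a separating hyperplane through $p$, producing in one direction a bad potential configuration lying entirely on one side of $h$, and in the other direction observing that a witnessing configuration forces every region to protrude into one halfspace. If anything, you are more careful than the paper about the open/closed halfspace bookkeeping, which the paper handles only informally.
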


\begin{proof} We prove the contrapositive. 
 For a point \(p\), if \(p \notin \texttt{IHull}(Q)\), then by definition there exists a particular potential configuration \(\sigma \in \mathcal{P}(Q)\) such that \(p\) and \(\texttt{Conv}(\sigma)\) are disjoint. Then let \(h\) denote a separating hyperplane, containing \(p\) for which all points of \(\sigma\) are contained in \(h^{inner}\), where \(h^{inner}\) is used to denote any halfspace of \(h\), with the opposite denoted \(h^{outer}\).  Since each of the \(d+1\) points of \(\sigma\) are selected from the \(d+1\) potential regions of \(Q\), it follows that every potential region of \(Q\) extends into \(h^{inner}\), and therefore no potential region is a subset of \(h^{outer}\). Thus, if \(p \notin \texttt{IHull}(Q)\), then \(p\) does not have Property 1.

Similarly, if a point \(p\) does not possess Property 1, then there is a hyperplane \(h'\) passing through \(p\) such that no potential region of \(Q\) is a subset of \(h^{outer}\). Then it is possible to select \(\sigma \in \mathcal{P}(Q)\) such that \(\sigma \subset h^{inner}\).  Since \(p \cap h^{inner} = \emptyset\), \(p\) and \(\sigma \) are disjoint, and thus \( p \notin \texttt{IHull}(Q)\). 
\end{proof}
\begin{cor}
\label{cor:4.2}
For \((d+1)\)-region set \(Q\), it follows from Lemma~\ref{lem:first} that  \(\texttt{IHull}(Q) = \texttt{Conv}(Q)\setminus\texttt{Conv}(\bigcup \limits_{q_i \in Q^d}q_i)\).  Here we have used the set notation of ``$A\setminus B$" to refer to the set of all elements of $A$ that are not elements of $B$. This can be seen by observing that no point with Property 1 may exist in the convex hull of a \(d\)-member subset of \(Q\). Any hyperplane through \(p\) and each of the \(d\) members may only contain the remaining region of \(Q\) in one or the other of its halfspaces, leaving one halfspace with no potential region as a subset.
\end{cor}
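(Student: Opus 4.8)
The plan is to establish the set equality $\texttt{IHull}(Q) = \texttt{Conv}(Q)\setminus\texttt{Conv}\bigl(\bigcup_{q_i \in Q^d} q_i\bigr)$ by proving two inclusions, using Lemma~\ref{lem:first} as the main tool. Throughout, note that for a $(d+1)$-region set $Q$, a $d$-member subset corresponds to dropping exactly one region, so $\bigcup_{q_i\in Q^d} q_i$ is best thought of as the union, over each way of omitting one region $q_j$ from $Q$, of $\texttt{Conv}(Q\setminus\{q_j\})$ --- I would first clarify this reading of the notation at the start of the argument.

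First I would show $\texttt{IHull}(Q) \subseteq \texttt{Conv}(Q)$. This is immediate: any potential configuration $\sigma\in\mathcal{P}(Q)$ has $\sigma\subseteq\texttt{Conv}(Q)$ since each chosen point lies in its region $q_i\subseteq\texttt{Conv}(Q)$, hence $\texttt{Conv}(\sigma)\subseteq\texttt{Conv}(Q)$, and the invariant hull is an intersection of such convex hulls.

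Next I would show that a point of $\texttt{IHull}(Q)$ cannot lie in $\texttt{Conv}(Q\setminus\{q_j\})$ for any $j$; this is the heart of the Corollary and is essentially the argument already sketched in the statement. Suppose $p\in\texttt{Conv}(Q\setminus\{q_j\})$ for some $j$; I want to contradict Property~1. The key geometric fact is that a point in the convex hull of the union of $d$ convex regions lies in the convex hull of $d$ points, one chosen from each region (by Carath\'eodory-type reasoning, or directly: $\texttt{Conv}$ of a union of $d$ convex sets is the union of simplices with one vertex in each set). So pick points $p_i\in q_i$ for $i\neq j$ with $p\in\texttt{Conv}(\{p_i : i\neq j\})$; these $d$ points span an affine subspace of dimension at most $d-1$, so there is a hyperplane $h$ containing $p$ and all of them. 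Then one open halfspace of $h$ contains none of $p_1,\dots,\widehat{p_j},\dots,p_{d+1}$; since each $p_i\in q_i$, that halfspace also fails to contain any $q_i$ with $i\neq j$ as a subset, and it can contain $q_j$ as a subset only if $q_j$ lies strictly on one side, but then the opposite halfspace contains no region of $Q$ as a subset. Either way some halfspace of $h$ has no region of $Q$ as a subset, so $p$ violates Property~1, hence $p\notin\texttt{IHull}(Q)$ by Lemma~\ref{lem:first}. This gives $\texttt{IHull}(Q)\subseteq\texttt{Conv}(Q)\setminus\texttt{Conv}\bigl(\bigcup_{q_i\in Q^d}q_i\bigr)$.

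For the reverse inclusion, I would take $p\in\texttt{Conv}(Q)\setminus\texttt{Conv}\bigl(\bigcup_{q_i\in Q^d} q_i\bigr)$ and verify Property~1 directly, which by Lemma~\ref{lem:first} (and nonemptiness, which is given) yields $p\in\texttt{IHull}(Q)$. Given any hyperplane $h$ through $p$, I must show each closed halfspace of $h$ contains some region $q_i$ entirely. Suppose not: say the closed halfspace $h^+$ contains no $q_i$ as a subset, so every $q_i$ meets the open halfspace $h^-$. Then for each $i$ I can choose a point $p_i\in q_i\cap h^-$; but then $p\in\texttt{Conv}(Q)$ together with the fact that all $p_i$ lie strictly on one side of $h$ while $p\in h$ forces $p$ to be a convex combination of the $p_i$ that does not use full weight on any single point --- more carefully, since $p\notin\texttt{Conv}(\bigcup_{q_i\in Q^d}q_i)$, a convex combination of the $d+1$ points $p_i$ equal to $p$ must use all $d+1$ of them with positive weight (otherwise $p$ would lie in the convex hull of $d$ of the regions), and a strictly-positive convex combination of points all lying in the open halfspace $h^-$ lies in $h^-$, contradicting $p\in h$. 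Hence every $q_i$ cannot be avoided by both halfspaces; a symmetric argument handles the situation, and Property~1 holds.

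The main obstacle I anticipate is the geometric bookkeeping around boundary cases --- in particular handling regions $q_j$ that lie entirely within the hyperplane $h$ or exactly on its boundary, and being careful about open versus closed halfspaces so that Property~1 (stated with closed halfspaces) matches the strict separation used in the proof of Lemma~\ref{lem:first}. I would resolve this by a limiting/perturbation argument on the hyperplane or by noting that if a region lies in $h$ itself it is contained in \emph{both} closed halfspaces and thus never causes a violation. The convex-hull-of-a-union-of-convex-sets fact (that it decomposes into simplices with one vertex per set) should be stated explicitly as a lemma or cited, since the whole argument rests on it.
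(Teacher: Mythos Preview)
Your forward inclusion argument (that no point of $\texttt{IHull}(Q)$ can lie in $\texttt{Conv}(Q\setminus\{q_j\})$) follows exactly the line the paper takes: choose one point from each of the $d$ remaining regions so that $p$ lies in their convex hull, pass a hyperplane through those $d$ points and $p$, and note that the omitted region $q_j$ can sit in at most one halfspace, leaving the other empty of any region---so Property~1 fails. The paper's justification of the corollary consists solely of this observation; it does not argue the reverse inclusion at all, and your clarification of the notation $\bigcup_{q_i\in Q^d}q_i$ as $\bigcup_j \texttt{Conv}(Q\setminus\{q_j\})$ is exactly the intended reading.

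Your attempt at the reverse inclusion, however, has a genuine gap. After choosing $p_i\in q_i$ in the open halfspace $h^-$, you assert that ``a convex combination of the $d+1$ points $p_i$ equal to $p$ must use all $d+1$ of them with positive weight.'' But nothing guarantees that $p$ is a convex combination of \emph{these particular} $p_i$: the hypothesis $p\in\texttt{Conv}(Q)$ only gives $p=\sum\lambda_i a_i$ for \emph{some} $a_i\in q_i$, not for the ones you selected in $h^-$. Hence the contradiction (a strictly positive combination of points in an open halfspace cannot lie on its bounding hyperplane) is not available. A correct route would be to argue directly from the definition: if $p\notin\texttt{Conv}(\sigma)$ for some configuration $\sigma$, separate $p$ from $\texttt{Conv}(\sigma)$ and show this forces $p$ into the convex hull of the $d$ regions on the $p$-side of the separating hyperplane---but that step needs its own justification. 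Your caution about open versus closed halfspaces is well placed; indeed the stated equality fails on the boundary (in $\mathbb{R}^1$ with $q_1=[0,1]$, $q_2=[2,3]$ one gets $\texttt{IHull}=[1,2]$ while the right-hand side is $(1,2)$), so this is a defect of the statement rather than of your plan.
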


We now present an auxiliary lemma that will be used in proving Theorem~\ref{thm:IHull} characterizing the invariant hulls. 
\begin{lem}
\label{lem:second}
If \(\texttt{IHull}(B_V)\) is a convex polytope, and if point \(p\) is a vertex of \(\texttt{IHull}(B_V)\), then for all $d$-region subsets $f \in B_V^d$, $p \notin int(\texttt{Conv}(f))$, where $int(\texttt{Conv}(f))$ refers to the interior of $\texttt{Conv}(f)$, or $\texttt{Conv}(f)$ without its boundary points. 
\end{lem}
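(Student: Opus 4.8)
The plan is to argue by contradiction: suppose $p$ is a vertex of $\texttt{IHull}(B_V)$ but $p \in int(\texttt{Conv}(f))$ for some $d$-region subset $f = \{B_{i_1},\dots,B_{i_d}\} \in B_V^d$. The key observation I would exploit is that if $p$ lies in the \emph{interior} of the convex hull of $d$ regions, then there is a whole neighborhood of $p$ covered by $\texttt{Conv}(f)$, and I want to leverage the remaining $n-d$ regions to show that this neighborhood (intersected with a suitable piece) is still contained in the convex hull of \emph{every} potential configuration, contradicting that $p$ is an extreme point rather than an interior point of $\texttt{IHull}(B_V)$.

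First I would set up the local picture. Since $p \in int(\texttt{Conv}(f))$ and $f$ has only $d$ regions, I would pick, for each potential configuration $\sigma \in \mathcal{P}(B_V)$, the $d$ points of $\sigma$ coming from the regions in $f$; call them $\sigma_f$. Because $p$ is in the interior of $\texttt{Conv}(f)$, I claim one can choose these points so that $p \in int(\texttt{Conv}(\sigma_f))$ as well — here I need a continuity/openness argument: the set of $d$-tuples $(y_1,\dots,y_d)$ with $y_j \in B_{i_j}$ whose convex hull contains $p$ in its interior is a relatively open condition, and the hypothesis $p \in int(\texttt{Conv}(f))$ guarantees it is attainable for \emph{all} admissible choices after a small perturbation — but actually I only need that $p \in \texttt{IHull}(f)\setminus\partial$, i.e. that $p$ is an interior point of $\texttt{IHull}(f)$ itself (since $\texttt{IHull}(f) \subseteq \texttt{IHull}(B_V)$ trivially fails — note $\texttt{IHull}$ shrinks as we add regions, so this inclusion goes the wrong way). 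Let me instead argue directly: I want a ball $B(p,\varepsilon)$ contained in $\texttt{IHull}(B_V)$.

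So the core step: take any $\sigma = \{y_1,\dots,y_n\} \in \mathcal{P}(B_V)$, with $y_j \in B_j$. The $d$ points of $\sigma$ indexed by $f$ span a simplex (generically of dimension $d$); since $p$ lies in the interior of $\texttt{Conv}(f)$, and the map sending a choice in $\prod_{j} B_{i_j}$ to its convex hull is such that $p$ stays interior under sufficiently small moves, there is a fixed $\varepsilon>0$ (depending on the \emph{distance} from $p$ to $\partial\,\texttt{Conv}(f)$ and the diameters of the regions, but uniform over $\sigma$) with $B(p,\varepsilon)\subseteq \texttt{Conv}(\{y_{i_1},\dots,y_{i_d}\}) \subseteq \texttt{Conv}(\sigma)$ — this is the place where I must be careful, because the inclusion $p\in int(\texttt{Conv}(f))$ for the \emph{whole} region $f$ does not automatically give a uniform interior-containment for every sub-selection $\sigma_f$; I may need to replace $\texttt{Conv}(f)$ by a slightly shrunk version or use that $int(\texttt{Conv}(f)) = \bigcup \{ int(\texttt{Conv}(\sigma_f)) : \sigma_f \in \mathcal{P}(f)\}$ and a compactness argument over the configuration space $\mathcal{P}(B_V)$ (which is compact as a product of compact regions). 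Granting that, $B(p,\varepsilon) \subseteq \texttt{Conv}(\sigma)$ for every $\sigma$, hence $B(p,\varepsilon)\subseteq\bigcap_\sigma \texttt{Conv}(\sigma) = \texttt{IHull}(B_V)$, so $p$ is an interior point of $\texttt{IHull}(B_V)$ — contradicting that $p$ is a vertex of the convex polytope $\texttt{IHull}(B_V)$. This contradiction establishes the lemma.

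The main obstacle I anticipate is exactly that uniformity/compactness step: turning "$p$ is interior to $\texttt{Conv}(f)$" into "there is a single $\varepsilon$ working for every potential configuration $\sigma_f$ drawn from $f$." I would handle it by noting $\mathcal{P}(f) = B_{i_1}\times\cdots\times B_{i_d}$ is compact, the function $\sigma_f \mapsto d(p, \partial\,\texttt{Conv}(\sigma_f))$ (suitably defined, e.g. as the largest $\rho$ with $B(p,\rho)\subseteq\texttt{Conv}(\sigma_f)$, or $0$ if $p\notin int$) is lower semicontinuous and, by the hypothesis together with the identity $int(\texttt{Conv}(f))=\bigcup_{\sigma_f}\texttt{Conv}(\sigma_f)$ restricted appropriately, strictly positive where it matters; taking the infimum over the compact set and checking it is attained and positive yields the uniform $\varepsilon$. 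A cleaner alternative, if the above identity is awkward, is to cite Corollary~\ref{cor:4.2} in spirit: a point interior to $\texttt{Conv}(f)$ for a $d$-set $f$ can never satisfy Property 1 relative to any $(d+1)$-superset of $f$, so it must be interior to the relevant face structure — but the direct ball argument above is the route I would write up.
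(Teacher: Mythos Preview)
Your argument has a genuine gap at exactly the place you flag as delicate, and it is not fixable by the compactness patch you sketch. You want a uniform $\varepsilon>0$ with $B(p,\varepsilon)\subseteq \texttt{Conv}(\{y_{i_1},\dots,y_{i_d}\})$ for every choice $y_{i_j}\in B_{i_j}$. But $\{y_{i_1},\dots,y_{i_d}\}$ is only $d$ points in $\mathbb{R}^d$, so its convex hull is at most $(d-1)$-dimensional and has empty interior in $\mathbb{R}^d$. No ball of positive radius can fit inside it, for \emph{any} $\sigma_f$. Consequently the function $\sigma_f\mapsto \sup\{\rho\ge 0: B(p,\rho)\subseteq \texttt{Conv}(\sigma_f)\}$ on which you want to run a compactness argument is identically zero, and the identity $int(\texttt{Conv}(f))=\bigcup_{\sigma_f}int(\texttt{Conv}(\sigma_f))$ you invoke is vacuously false for the same reason. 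The difficulty is dimensional, not one of uniformity.

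More fundamentally, you are trying to prove that $p\in int(\texttt{IHull}(B_V))$, which is strictly stronger than what the lemma asserts and need not hold: a vertex $p$ could perfectly well sit on the boundary of $\texttt{Conv}(f)$ for some $d$-subset $f$ while still being a vertex of $\texttt{IHull}(B_V)$. The paper does not attempt to show $p$ is interior. Instead it splits into two cases: either $int(\texttt{Conv}(f))\subseteq \texttt{IHull}(B_V)$, in which case $p$ is indeed interior and not a vertex; or $int(\texttt{Conv}(f))\not\subseteq \texttt{IHull}(B_V)$, in which case one picks a nearby $p_2\in int(\texttt{Conv}(f))\setminus \texttt{IHull}(B_V)$ and derives a contradiction from the invariance of $\texttt{IHull}(B_V)$ under potential configurations. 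Your route collapses both cases into the first one, and that is why it breaks. If you want to repair your argument you must bring the remaining $n-d$ regions into play to get full-dimensional simplices, but then you no longer have a uniform interior containment over all configurations, and you are driven back to a case analysis of the paper's type.
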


\begin{proof}
Let $f_{int}$ denote $int(\texttt{Conv}(f))$ in the following argument. Suppose \(f\) is a set of \(d\) potential regions of \(B_V\) such that \(p \in f_{int}\) and \(p\) is a vertex of \( \texttt{IHull}(B_V)\). There are two possibilities: 1) \( f_{int} \subset \texttt{IHull}(B_V)\), or 2) \( f_{int} \not\subset \texttt{IHull}(B_V)\).  In the first case, \(p \in int(\texttt{IHull}(B_V))\) and therefore $p$ is not a vertex.  In the second case, since $f_{int}$ is open, a $d$-point configuration of $f_{int}$ can always be found that includes $p_2 \ne p$ with $p_2 \in f_{int}\setminus \texttt{IHull}(B_V)$ such that $\texttt{Conv}(\texttt{IHull}(B_V) \cup p) \ne \texttt{Conv}(\texttt{IHull}(B_V) \cup p_2)$. But since $\texttt{IHull}(B_V) =\texttt{Conv}(\texttt{IHull}(B_V))$ is invariant for any potential configuration by its definition, this is a contradiction, and therefore the assertion of Lemma~\ref{lem:second} is true.
\end{proof}

We now show that the the invariant convex hull of \(n\) potential regions is the convex hull of the union of invariant hulls of each of the  \({n}\choose{d+1}\) subsets of \(B_V^{d+1}\). First, we state a generalization of a theorem of Caratheodory to sets in \(\mathbb{R}^d\), that will be used in our proof. 
\begin{theorem}\cite{BARANY1982141}
\label{thm:Cara}
For a family of sets $K\in \mathbb{R^d}$, with $|K|\ge d+1$,  $\bigcup\limits_{k_i \in K^{d+1}}\texttt{Conv}(k_i) = \texttt{Conv}(K)$.
\end{theorem}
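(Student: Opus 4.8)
The statement is the set-valued lifting of the classical theorem of Carath\'eodory, and the plan is to derive it directly from the point version. Write $U=\bigcup_{S\in K}S\subseteq\mathbb{R}^d$, so that $\texttt{Conv}(K)$ is shorthand for $\texttt{Conv}(U)$, and for a $(d+1)$-member subfamily $k\in K^{d+1}$ write $\texttt{Conv}(k)$ for $\texttt{Conv}\!\big(\bigcup_{S\in k}S\big)$. The inclusion $\bigcup_{k\in K^{d+1}}\texttt{Conv}(k)\subseteq\texttt{Conv}(K)$ is immediate from monotonicity of the convex-hull operator: every $k\in K^{d+1}$ has $\bigcup_{S\in k}S\subseteq U$, hence $\texttt{Conv}(k)\subseteq\texttt{Conv}(U)$, and this persists after taking the union over all such $k$.

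For the reverse inclusion I would take an arbitrary $x\in\texttt{Conv}(K)=\texttt{Conv}(U)$ and invoke the classical Carath\'eodory theorem: $x=\sum_{j=1}^{m}\lambda_j y_j$ for some $m\le d+1$, coefficients $\lambda_j\ge 0$ with $\sum_j\lambda_j=1$, and points $y_j\in U$. Each $y_j$ belongs to at least one member of $K$; selecting one such member per index yields at most $m\le d+1$ distinct sets of $K$ whose union already contains $\{y_1,\dots,y_m\}$, so $x$ lies in the convex hull of that union. If the number of distinct sets chosen is strictly less than $d+1$, pad the collection with arbitrarily chosen additional members of $K$ until it has exactly $d+1$ members --- this is possible precisely because $|K|\ge d+1$. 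Since enlarging the subfamily only enlarges its union, and hence its convex hull, the padded subfamily $k\in K^{d+1}$ still satisfies $x\in\texttt{Conv}(k)$, giving $x\in\bigcup_{k\in K^{d+1}}\texttt{Conv}(k)$ and completing the argument.

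There is no genuine obstacle here; the only points demanding care are bookkeeping ones. First, one must note that the hypothesis $|K|\ge d+1$ is exactly what makes the padding step legitimate, so the theorem would be false (or vacuous) without it. Second, one should be explicit that the relevant Carath\'eodory bound is $d+1$ \emph{points}, which is why $d+1$ is also the right number of \emph{sets}; a refinement using the affine dimension of $U$ would give a smaller bound but is not needed here. Finally, it is worth remarking that no topological hypotheses on the members of $K$ are used --- Carath\'eodory's theorem applies to arbitrary subsets of $\mathbb{R}^d$ --- which is relevant since the potential regions $B_v$ in our setting happen to be closed and bounded, but the argument never exploits this.
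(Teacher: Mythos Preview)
The paper does not supply its own proof of this theorem; it is quoted from \cite{BARANY1982141} and used as a black box in the proof of Theorem~\ref{thm:IHull}. Your derivation from the classical (point) Carath\'eodory theorem is correct and is the standard way to obtain the set-valued version, including the padding step that uses the hypothesis $|K|\ge d+1$.
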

\textit{Note:  Here and throughout the paper, if $K$ is a family of point sets, then $\texttt{Conv}(K) = \texttt{Conv}(\bigcup \limits_{k \in K} K)$}.
\\
In other words the convex hull of every point contained in the family of sets is equivalent to the union of the convex hulls of all of its $(d+1)$-member subsets.
Now, we state one of the main results of this section.
\begin{theorem}
\label{thm:IHull}
Let \(B_V= \{B_1,\cdots,B_n\}\) be a collection of \(n\) potential regions of \(\mathbb{R}^d\) and \(B_V^{d+1}\) denote the family of \((d+1)\)-member subsets of \(B_V\). Then, 
\begin{equation}
\texttt{IHull}(B_V) = \texttt{Conv}(\bigcup \limits_{Q\in B_V^{d+1}} \texttt{IHull}(Q)) .  
\end{equation}
\end{theorem}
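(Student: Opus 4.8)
The plan is to prove the two set-inclusions separately. For the inclusion $\texttt{Conv}(\bigcup_{Q\in B_V^{d+1}}\texttt{IHull}(Q)) \subseteq \texttt{IHull}(B_V)$, I would argue that each $\texttt{IHull}(Q)$ is already contained in $\texttt{IHull}(B_V)$: given any potential configuration $P\in\mathcal{P}(B_V)$, its restriction to the $d+1$ regions in $Q$ is a potential configuration of $Q$, so $\texttt{IHull}(Q)\subseteq\texttt{Conv}(P\restriction_Q)\subseteq\texttt{Conv}(P)$; intersecting over all $P$ gives $\texttt{IHull}(Q)\subseteq\texttt{IHull}(B_V)$. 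Since $\texttt{IHull}(B_V)$ is convex (it is an intersection of convex sets), it contains the convex hull of the union of the $\texttt{IHull}(Q)$'s, which is the desired inclusion. This direction is routine.

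The reverse inclusion $\texttt{IHull}(B_V)\subseteq\texttt{Conv}(\bigcup_{Q\in B_V^{d+1}}\texttt{IHull}(Q))$ is the substantive part. Here I would take a point $p\in\texttt{IHull}(B_V)$ and show it lies in the convex hull on the right. The natural tool is the generalized Carath\'eodory theorem (Theorem~\ref{thm:Cara}): if I can produce, for each potential configuration, an appropriate $(d+1)$-subset whose invariant hull ``sees'' $p$, then Carath\'eodory stitches these together. Concretely, I would first observe that $\texttt{IHull}(B_V)\subseteq\texttt{Conv}(B_V)=\texttt{Conv}(\bigcup_{Q\in B_V^{d+1}}\texttt{Conv}(Q))$ by Theorem~\ref{thm:Cara} applied to the family $B_V$. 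So $p$ is a convex combination of points drawn from the sets $\texttt{Conv}(Q)$. The task is to upgrade each such $\texttt{Conv}(Q)$-contribution to an $\texttt{IHull}(Q)$-contribution. I would use Lemma~\ref{lem:first} together with Corollary~\ref{cor:4.2}: a point of $\texttt{Conv}(Q)$ fails to be in $\texttt{IHull}(Q)$ exactly when it lies in $\texttt{Conv}(\bigcup_{q_i\in Q^d}q_i)$, i.e., in the convex hull of a $d$-member subfamily. The strategy is then to show that the ``bad'' contributions can always be redistributed: if $p\in\texttt{IHull}(B_V)$ but $p\notin\texttt{Conv}(\bigcup_Q\texttt{IHull}(Q))$, there is a separating hyperplane $h$ through (or past) $p$ with all the sets $\texttt{IHull}(Q)$ on one side; I would then build a potential configuration $\sigma\in\mathcal{P}(B_V)$ lying strictly on that same side, contradicting $p\in\texttt{IHull}(B_V)\subseteq\texttt{Conv}(\sigma)$. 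Building $\sigma$ requires showing that every potential region $B_i$ can be ``pushed'' to that side without its selected point escaping — this is where Lemma~\ref{lem:second} on vertices (no vertex of the invariant hull lies in the interior of a $d$-region convex hull) and the Property~1 characterization come in.

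I expect the main obstacle to be exactly this construction of the separating configuration $\sigma$ in the hard direction: one must argue that when $p$ is outside $\texttt{Conv}(\bigcup_Q\texttt{IHull}(Q))$, the geometry of the regions forces some configuration whose convex hull also misses $p$, which means carefully handling the $d$-dimensional ``thin'' pieces $\texttt{Conv}(f)$ for $f\in B_V^d$ where a region can contribute a point without contributing full-dimensional mass. Lemma~\ref{lem:second} is evidently designed to rule out the degenerate case where $p$ sits in such a thin piece; the proof should reduce to the case where $\texttt{IHull}(B_V)$ is a full-dimensional polytope (or handle lower-dimensional cases by restricting to the affine hull) and then apply a vertex-by-vertex argument: it suffices to show each vertex of $\texttt{IHull}(B_V)$ lies in some $\texttt{IHull}(Q)$, after which convexity and the easy inclusion finish the proof. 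Showing a vertex $v$ of $\texttt{IHull}(B_V)$ belongs to some $\texttt{IHull}(Q)$ is where I would concentrate the effort: using Lemma~\ref{lem:second}, $v$ is not in the interior of any $\texttt{Conv}(f)$, $f\in B_V^d$, and from the global Property~1-type condition at $v$ one extracts $d+1$ regions realizing the halfspace condition of Lemma~\ref{lem:first} locally, giving $v\in\texttt{IHull}(Q)$ for that $Q$.
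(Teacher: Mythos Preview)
Your proposal is correct and ultimately takes essentially the same approach as the paper: the easy inclusion is handled exactly as you describe, and for the hard inclusion the paper also reduces to showing that every vertex of $\texttt{IHull}(B_V)$ lies in $\bigcup_{Q\in B_V^{d+1}}\texttt{IHull}(Q)$, using Lemma~\ref{lem:second} to exclude the vertex from the interior of any $d$-region convex hull, Theorem~\ref{thm:Cara} to place it in some $\texttt{Conv}(Q)$, and Corollary~\ref{cor:4.2}/Lemma~\ref{lem:first} to conclude it lies in $\texttt{IHull}(Q)$. Your exploratory separating-hyperplane idea is not needed; the vertex argument you settle on at the end is precisely the paper's route.
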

\begin{proof}
We will show that all vertices of \(\texttt{IHull}(B_V)\) are contained in \(\bigcup \limits_{Q\in B_V^{d+1}} \texttt{IHull}(Q)\).  If \(p\) is a vertex of \( \texttt{IHull}(B_V)\), clearly \(p \in \texttt{Conv}(B_V)\). By Lemma~\ref{lem:second}, \(p\) cannot be contained in the interior of the convex hull of any \(d\)-member subset of \(B_V)\). By Theorem~\ref{thm:Cara},  the convex hull of \(B_V\) in \(\mathbb{R}^d\) is equal to the union of the convex hulls of its \((d+1)\)-member subsets.  From Lemma~\ref{lem:first} and Lemma~\ref{lem:second}, it follows that \(\bigcup \limits_{Q\in B_V^{d+1}} \texttt{IHull}(Q)\) precisely contains all points that are in the interior of \(B_V\) but not interior to the convex hull of any \(d\)-member subset.  This implies that \(p \in \bigcup \limits_{Q\in B_V^{d+1}} \texttt{IHull}(Q)\).  Furthermore for any \(Q\), every point \(x \in \texttt{IHull}(Q)\) is trivially a point in \(\texttt{IHull}(B_V)\), since \(Q \subset B_V\) and any point configuration of \(B_V\) has a subset of points from the potential regions of \(Q\). Since $\bigcup \limits_{Q\in B_V^{d+1}} \texttt{IHull}(Q)$ contains all vertices of $\texttt{IHull}(B_V)$ and contributes no points exterior to $\texttt{IHull}(B_V)$, it follows that \(\texttt{IHull}(B_V) = \texttt{Conv}(\bigcup \limits_{Q\in B_V^{d+1}} \texttt{IHull}(Q)) \).
\end{proof}

\subsection{Computing the Invariant Hull}
\label{sec:invariant_hull_algo}

We now outline a procedure for computing the invariant hull of \(B_V\), given that \(|B_V| > d+1\). The procedure involves computing the equation of a tangent hyperplane equation to \(d\) potential regions of \(\mathbb{R}^d\). In the case that potential regions are polygons in \(\mathbb{R}^2\), linear-time algorithms have been developed to find \textit{outer} tangent lines, that is the tangent lines that have all participating polygons on one side ~\cite{CommonTangents2019}. 
\\
Let \(Q = \{q_1,q_2,\cdots,q_{d+1}\} \in B_V^{d+1}\). As stated in Corollary~\ref{cor:4.2}, \(\texttt{IHull}(Q)\)is equivalent to \(\texttt{Conv}(Q)\setminus \texttt{Conv}(Q^d)\). A simple approach is to compute the vertices of \(\texttt{IHull}(Q)\):
\begin{enumerate}
\item Select \(q_i \in Q\). 
\item For each \(j = 1,2,\cdots,d+1\), and each \(f_j \in Q^d\) containing \(q_i\), compute hyperplane \(h_j\) tangent to the \(d\) members of \(f_j\) such that \(f_j \subset h^{outer}\) and \(Q\setminus f \subset h^{inner}\). 
\item Compute point of intersection:\(\{x \in \mathbb{R}^d: h_1 \cdot x = h_2 \cdot x = \cdots =h_{d+1} \cdot x\}\). Label \(x\) as \(w_i\).
\item Repeat steps 1-3 for all remaining \(q_i \in Q\). 
\item \(\texttt{IHull}(Q) = \texttt{Conv}(\bigcup \limits_{\forall q_i \in Q} w_i)\).
\end{enumerate}
After repeating steps 1-5 for all \(Q \in B_V^{d+1}\), the invariant hull of \(B_V\) is computed with any convex hull routine  as:  \[\texttt{IHull}(B_V) = \texttt{Conv}(\bigcup \limits_{Q\in B_V^{d+1}} \texttt{IHull}(Q)), \]
according to Theorem~\ref{sec:invariant_hull_algo}.

\begin{figure*}[!h]
\centering
\begin{subfigure}[b]{0.133\textwidth}
\centering
\includegraphics[scale=0.29]{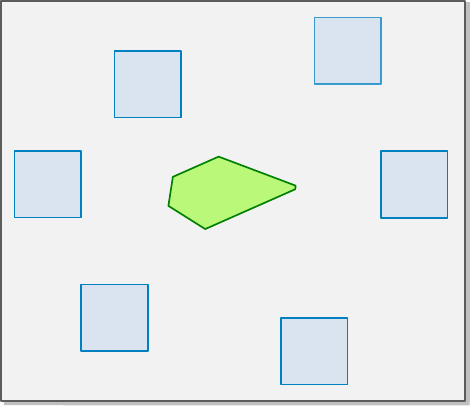}
\caption{}
\end{subfigure}
\begin{subfigure}[b]{0.133\textwidth}
\centering
\includegraphics[scale=0.29]{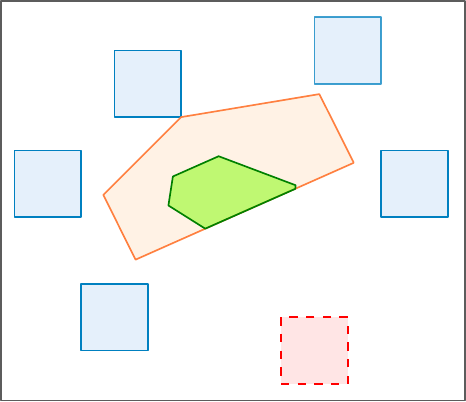}
\caption{}
\end{subfigure}
\begin{subfigure}[b]{0.133\textwidth}
\centering
\includegraphics[scale=0.29]{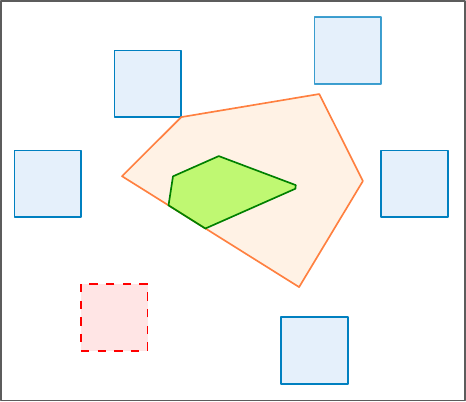}
\caption{}
\end{subfigure}
\begin{subfigure}[b]{0.133\textwidth}
\centering
\includegraphics[scale=0.29]{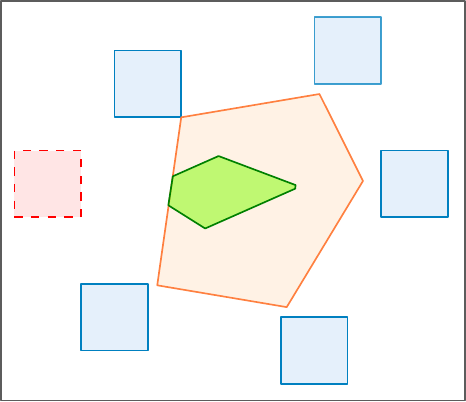}
\caption{}
\end{subfigure}
\begin{subfigure}[b]{0.133\textwidth}
\centering
\includegraphics[scale=0.29]{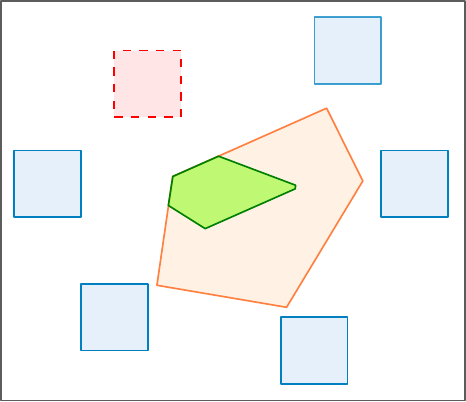}
\caption{}
\end{subfigure}
\begin{subfigure}[b]{0.133\textwidth}
\centering
\includegraphics[scale=0.29]{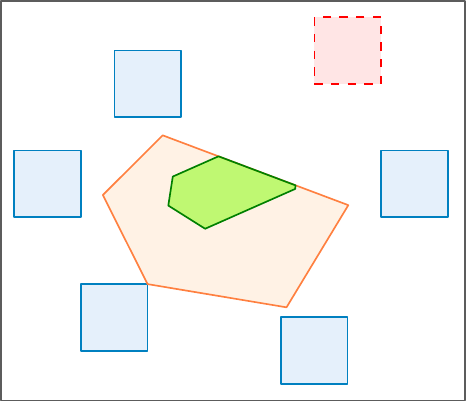}
\caption{}
\end{subfigure}
\begin{subfigure}[b]{0.133\textwidth}
\centering
\includegraphics[scale=0.29]{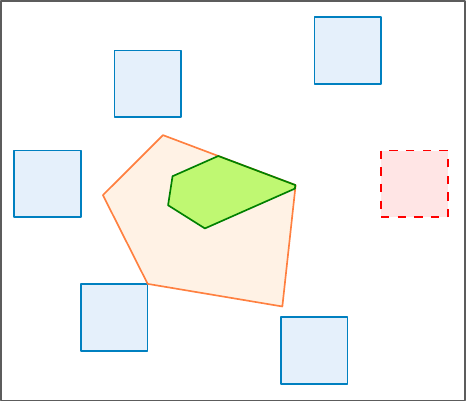}
\caption{}
\end{subfigure}
\caption{(a) Centerpoint region based on the intersection of the invariant hulls of subsets of potential regions. In each of (b)--(g), the green shaded region is contained entirely in the invariant hull of the five potential regions.}
\label{fig:CP}
\end{figure*}

 \section{Resilient Consensus through Centerpoint using Invariant Hulls of Potential Regions}
 \label{sec:CPIH}

 In this section, we present a way for a normal agent $v$ to compute a safe point despite access to only imprecise states of neighbors. Moreover, the proposed approach allows $F$ number of adversaries in the neighborhood of $v$, where $F \le \frac{N_v} {d+1} - 1$.  The core of our method relies on the intersection of the invariant hulls of potential regions.

 First, consider the case with no imprecision, i.e., normal agents observe true states of their neighbors (as discussed in \cite{abbas2022resilient,li2022byzantine}). In this ideal situation, a safe point for a normal agent $v$ is effectively the centerpoint of its neighbors' states, provided the number of adversarial agents in the neighborhood of $v$ is $F \le \frac{N_v} {d+1} - 1$. This proposition is rooted in geometric principles. Specifically, a centerpoint of a given set of $N_v$ points in $\mathbb{R}^d$ lies within the convex hull of any subset comprising more than $\frac{d}{d+1}N_v$ of those points \cite{pach2011combinatorial,matousek2013lectures,ray2007optimal}. To put it practically, if we select an arbitrary subset of neighbors of $v$---comprising more than $\frac{d}{d+1}N_v$ neighbors---and compute the convex hull of their states, a centerpoint is guaranteed to lie within this hull. Since one such choice of more than $\frac{d}{d+1}N_v$ neighbors of a normal agent $v$ consists of only normal neighbors of $v$, a centerpoint of the states of agent $v$'s neighbors also lies within the convex hull of \emph{only} the normal neighbors' states. 

 In cases where the observed states are imprecise, the approach to determining a safe point for a normal agent $v$ must be modified. As elaborated in Section~\ref{sec:Res_do_not_work}, the inherent imprecision precludes the use of convex hulls calculated from observed states. Instead of knowing a true state of the neighbor, $v$ only knows the potential region of the neighbor (containing the true state).  So, instead of computing convex hulls of subsets of neighbors' observed states, a normal agent computes the invariant hulls of subsets of potential regions of neighbors. Assume that a normal agent $v$ has $N_v$ neighbors, of which at most $\frac{N_v}{d+1}-1$ are adversarial. For any subset of neighbors containing more than $\frac{d}{d+1}{N_v}$ neighbors, agent $v$ computes the invariant hull of the potential regions of the corresponding neighbors, and then computes the intersection of all such invariant hulls. This resulting intersection serves as a modified `centerpoint region', constructed from the invariant hulls of potential regions. Importantly, every point in this new centerpoint region is guaranteed to lie within the convex hull of the true states of agent $v$'s normal neighbors, and is thus a safe point (despite imprecise observed states). We illustrate this through an example below. 

 \emph{Example:} Consider a set of six potential regions in a plane, each corresponding to an agent. Note that the true states of these agents are `hidden' within these potential regions. Among these six, one potential region belongs to an adversarial agent, though we don't know which one. The green-shaded region in Figure~\ref{fig:CP}(a) is the centerpoint region constructed using the intersection of invariant hulls of subsets of potential regions. To see this, consider a subset of five potential regions---calculated as $\frac{d}{d+1}N + 1 = 5$---and determine their invariant hull.  As demonstrated in Figures~\ref{fig:CP}(b)--\ref{fig:CP}(g), the green centerpoint region is consistently enclosed within the invariant hull, regardless of which five potential regions are chosen. Notably, one of these choices in Figures~\ref{fig:CP}(b)--\ref{fig:CP}(g) includes the actual adversary. In this case, the computed invariant hull---by definition---remains a subset of the convex hull formed by the true states of normal agents. Therefore, the points in the green-shaded centerpoint region, which is contained entirely in the invariant hull, are assured to be safe points.

Next, we outline the steps of the state update for each normal agent. An essential step is the computation of a safe point using the idea of invariant hulls, as described previously. The steps of the algorithm, which we call \emph{centerpoint of invariant hulls (CPIH)} method, are as follows:

 For each node \(v \in V_n\). 
\begin{enumerate}
\item At time \(t\) compute \(k = \frac{dN_v}{d+1}+1 \).
\item For \(C \in B_V^k\), compute \(\texttt{IHull}(C)\). ( Compute invariant hull of each \(k\)-tuple of \(B_V\)).
\item If \( \bigcap \limits_{C \in B_V^k} \texttt{IHull}(C) = \emptyset\), set \(x_v(t+1) = x_v(t)\).  Otherwise, proceed to step 4. (Do nothing at time \(t\) if safe region does not exist).
\item Select safe point \(p_v(t) \in \bigcap \limits_{C \in B_V^k} \texttt{IHull}(C)\).   
\item \(x_v(t+1) = \alpha_v(t)p_v(t)+(1-\alpha_v(t))x_v(t)\).  
\end{enumerate}
The value of \(\alpha_v(t)\) is a dynamic weight that may be chosen in range \([0,1]\) according to the application.

The CPIH algorithm identifies a region that is a generalization of the centerpoint region in that the property of a CPIH safe point has identical properties to that of of a centerpoint with compact sets replacing the role of points.  Thus each halfspace of a hyperplane that intersects a CPIH safe point contains at least \(\lfloor \frac{N}{d+1} \rfloor\) compact sets (potential regions of \(B_V\)). As such, every \( \frac{dN}{d+1}+1\) potential regions contain a CPIH safe point. 

\subsection{Simulations and Discussion}
\begin{figure*}[!h]
\centering
\begin{subfigure}[b]{0.28\textwidth}
\centering
\includegraphics[scale=0.2]{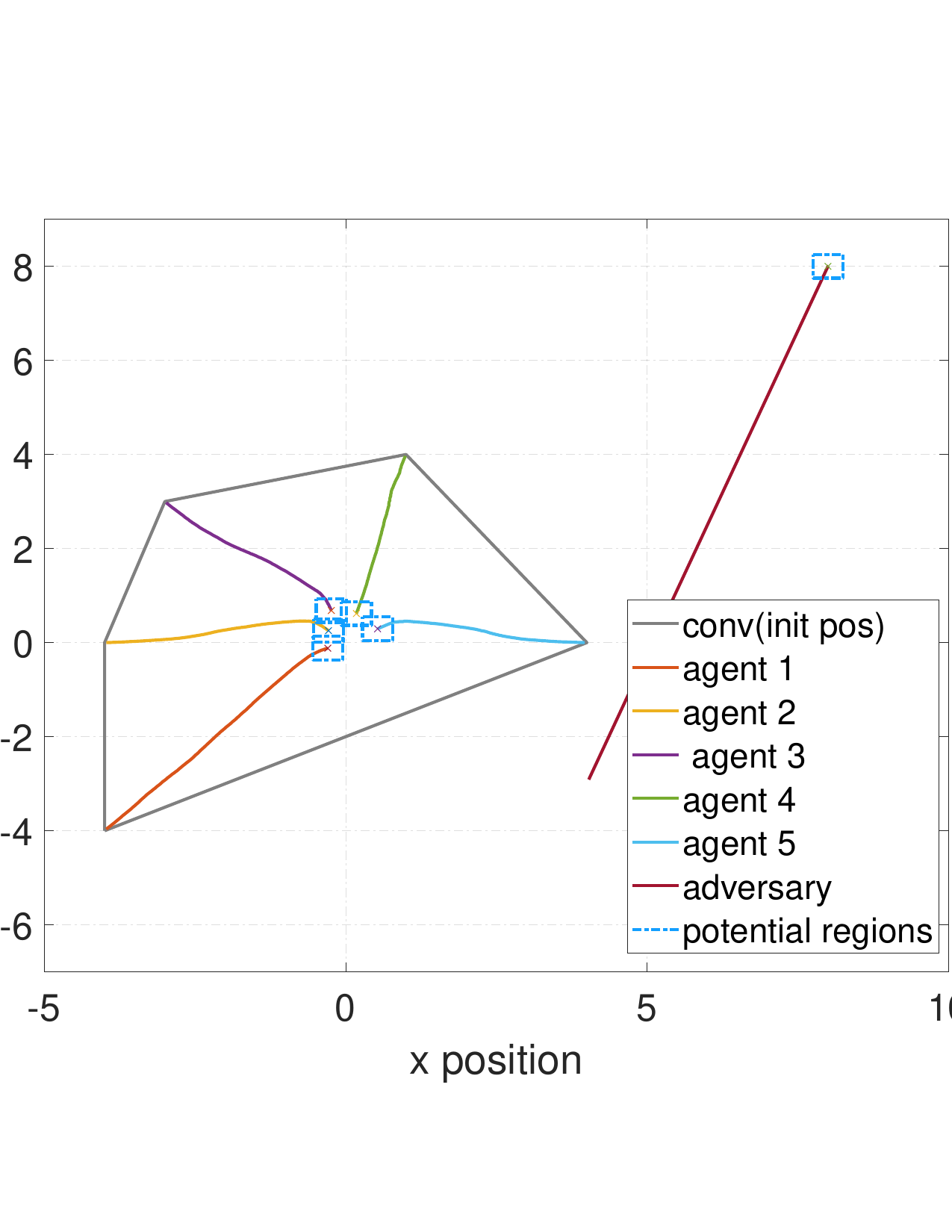}
\caption{$\delta = 0.5$}
\end{subfigure}
\begin{subfigure}[b]{0.28\textwidth}
\centering
\includegraphics[scale=0.2]{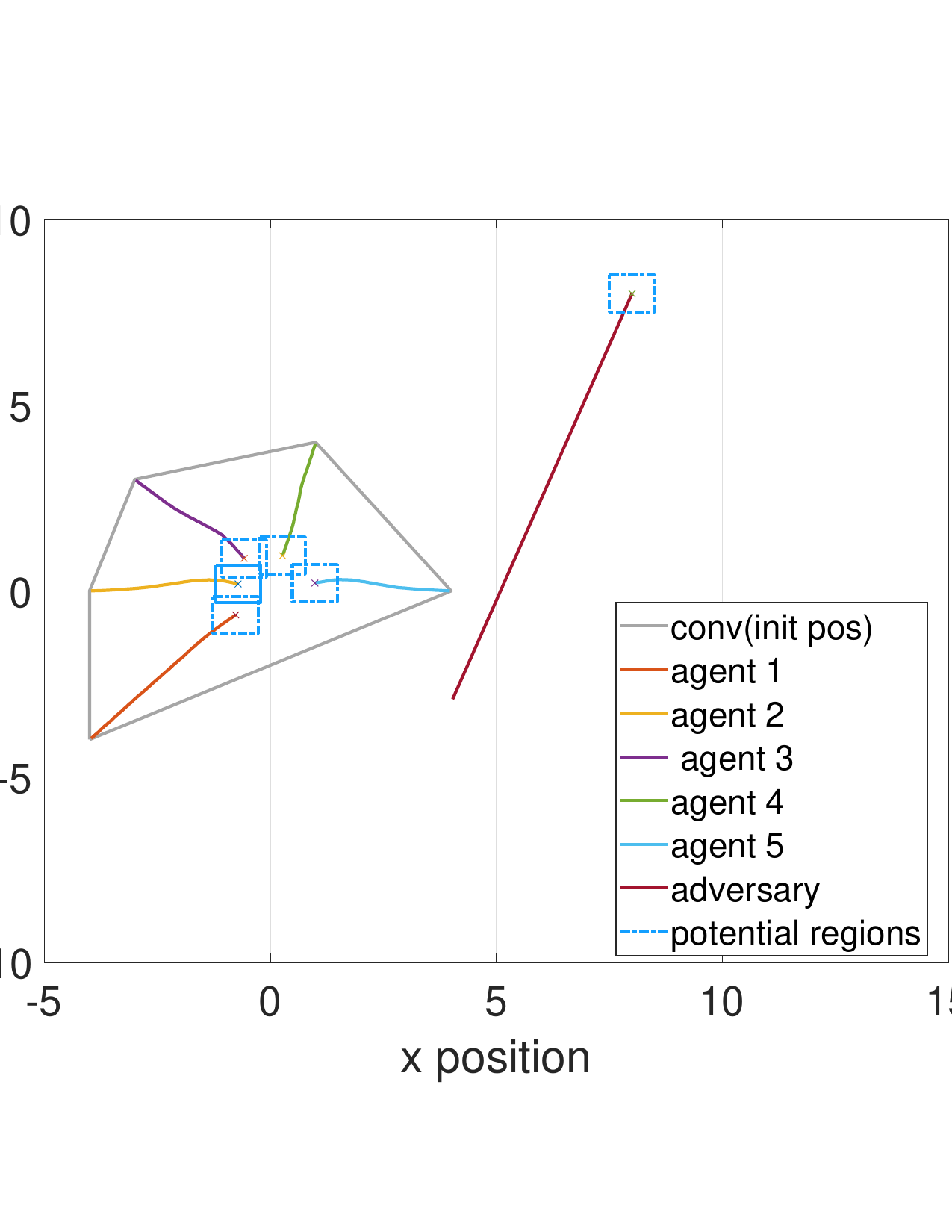}
\caption{$\delta = 1$}
\end{subfigure}
\begin{subfigure}[b]{0.28\textwidth}
\centering
\includegraphics[scale=0.2]{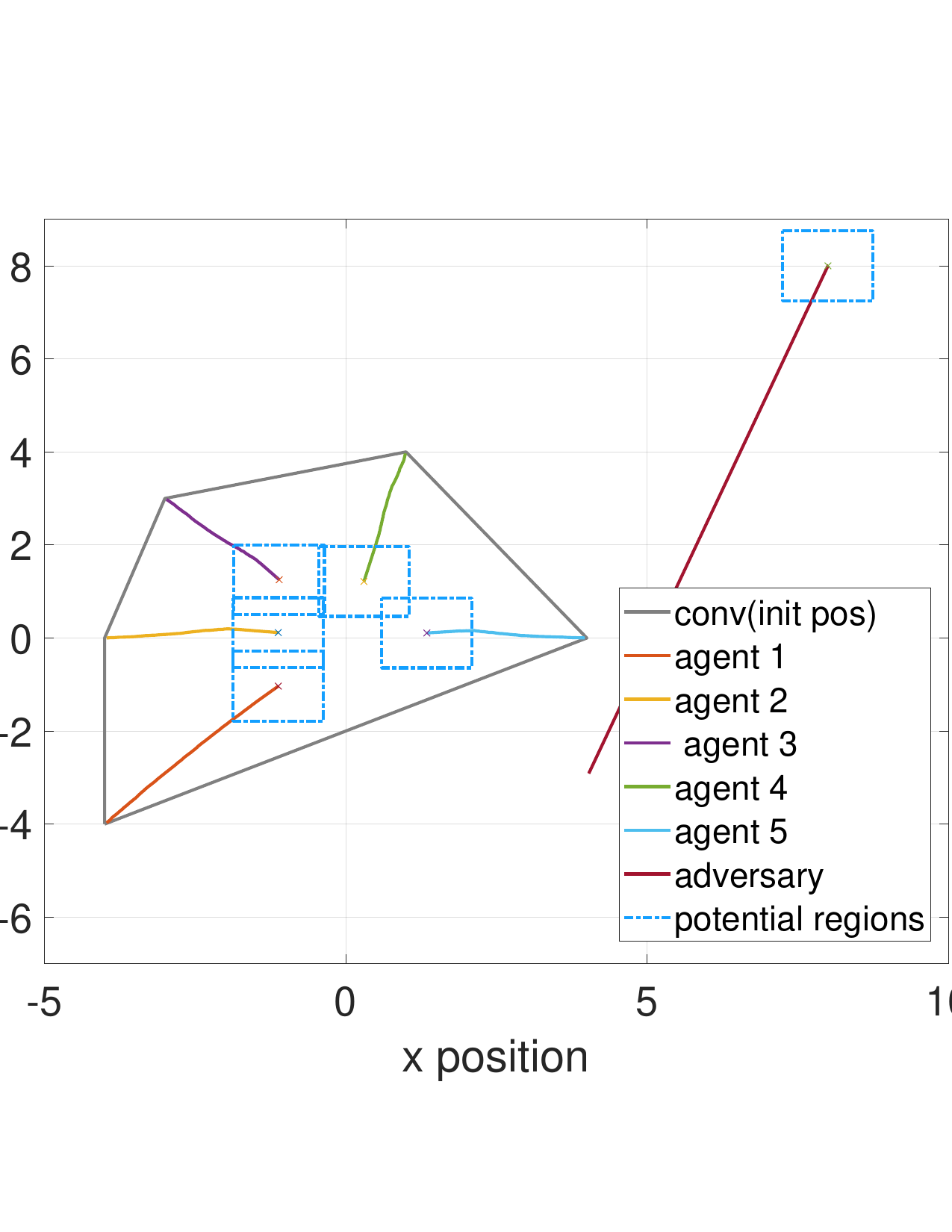}
\caption{$\delta = 1.5$}
\end{subfigure}
\caption{Implementation of CPIH algorithm. Normal agents remain inside the convex hull of their initial states and converge to a degree that depends on the level of state imprecision, as quantified by the parameter $\delta$.}
\label{fig:good_plots}
\end{figure*}
In order to verify that the CPIH Algorithm succeeds where the ordinary centerpoint-based consensus algorithm fails (as illustrated in Figure~\ref{fig:res_con_imprecision}), we executed a series of simulations using the above algorithm. For simplicity, potential regions were chosen to be a square of width \(\delta\) centered around each node, and $G$ was chosen to be a complete graph.  At each time step, agents received a uniformly random state estimate within the potential regions of each of its neighbors. For illustrative purposes, we chose a small range of values for \(\delta\). For each value of \(\delta\), a  simulation was run for $5000$ time steps. Figure~\ref{fig:good_plots} shows the results.

As may be inferred from Step 3 of the CPIH algorithm, this procedure will not, in general, result in uniform convergence of agents states. Instead, nodes will ``cluster'' into a region whose volume depends on the magnitude of \(\delta\).  While this may appear to be an unsatisfactory result, it nonetheless assures that nodes will remain within their initial convex hull, whereas this is not the case for the resilient consensus algorithms that do not account for imprecision. As a result, CPIH algorithm presents a tradeoff between network safety and convergence precision, which is of significance on its own, especially when safety is of high priority in a setting with imprecision. In real world situations, imprecision typically decreases with proximity. Future developments of CPIH could achieve a convergent consensus  under such an imprecision model.




\section{Conclusion}
\label{sec:con}
In this paper, we have demonstrated that traditional resilient consensus algorithms can fail when they do not account for state imprecision. To address this, we proposed an uncertainty-aware geometric solution ensuring resilience against adversaries and state imprecision. Specifically, we extended the centerpoint-based resilient consensus algorithm by introducing the concept of an 'invariant hull,' which we call the CPIH algorithm. Through illustrative simulations, we have showcased the effectiveness of our algorithm under varying levels of uncertainty. In the future, we aim to explore the resilience-accuracy trade-off further by determining the conditions under which the invariant hull of imprecise regions becomes empty. 

\bibliographystyle{IEEEtran} 
\bibliography{references}  

\end{document}